\newtheorem{theorem}{Theorem}
\title{Hybrid Direct-Indirect Adaptive Control of Nonlinear System with Unmatched Uncertainty}
\author{Girish Joshi and Girish Chowdhary
	\thanks{*This work was supported by FA9550-14-1-0399, FA9550-15-1-0146.}
	\thanks{Authors are with Coordinated Science Laboratory and Department of Aerospace Engineering, University of Illinois,
		Urbana-Champaign, IL, USA
		{\tt\small girishj2@illinois.edu,girishc@illinois.edu}}%
}
\begin{document}
	
	\maketitle \thispagestyle{empty} \pagestyle{empty}

\begin{abstract}

In this paper, we present a hybrid direct-indirect model reference adaptive controller (MRAC), to address a class of problems with matched and unmatched uncertainties. In the proposed architecture, the unmatched uncertainty is estimated online through a companion observer model. Upon convergence of the observer, the unmatched uncertainty estimate is remodeled into a state dependent linear form to augment the nominal system dynamics. Meanwhile, a direct adaptive controller designed for a switching system cancels the effect of matched uncertainty in the system and achieves reference model tracking. We demonstrate that the proposed hybrid controller can handle a broad class of nonlinear systems with both matched and unmatched uncertainties.

\end{abstract}

\section{Introduction}
Direct adaptive controllers for matched uncertainty, i.e., uncertainties in the span of the control, have been studied widely in the adaptive control literature  \cite{aastrom2013adaptive}. The direct adaptive controllers work exceedingly well for a large class of uncertain dynamical systems
and have been theoretically and experimentally proven to outperform non-adaptive baselines \cite{chowdhary2012experimental,tao2003adaptive,leman2009l1}.

However, the matched uncertainty condition is restrictive for the generalization of adaptive control to a broader class of nonlinear systems. Notably, the direct adaptive controllers cannot handle the systems with uncertainties appearing in the levels-of-differentiation other than that of the control input; i.e., not in the span of the control \cite{ 653051}. The objective of this paper is to provide hybrid adaptive control methods that can handle a broad class of systems with both matched and unmatched uncertainties. 

We present a hybrid adaptive control architecture where an estimate of unmatched uncertainty is used directly in the controller synthesis for the plant. We achieve this by augmenting the nominal system model with the estimate of unmatched uncertainty. Further, we use this augmented model to synthesize the total direct-indirect adaptive controller to achieve the desired reference model tracking.

\subsection{State of the Art}

While there is a plethora of work available in the literature handling matched uncertainties, there are very few references to controllers specifically focused on addressing the systems with unmatched uncertainties. A $\mathcal{L}_1$-adaptive control architecture for a class of nonlinear systems with unmatched uncertainties is presented in \cite{xargay2010l1}. An adaptive control using a back-stepping approach for observable minimum phase nonlinear systems with unmatched uncertainty is presented in \cite{koshkouei2000adaptive}. The combination of integral sliding-mode control and other robust techniques like $\mathcal{H_1}$ is examined in \cite{castanos2006analysis}. The formulation in \cite{yang2010lmi} addresses the application of a Linear Matrix Inequality (LMI)-based tool for analyzing the stability characteristics and the performance degradation of an adaptive system in the presence of unmatched uncertainties. While the methods \cite{castanos2006analysis}, \cite{yang2010lmi} have been shown to guarantee the stability in the presence of unmatched uncertainty, they do not use the unmatched uncertainty estimates directly for control generation, but rather ensure robustness in their presence. 

A hybrid model reference adaptive control for unmatched uncertainties is presented in \cite{quindlen2015hybrid}. The controller proposed in \cite{quindlen2015hybrid} utilizes the Concurrent Learning MRAC architecture. The premise of this work is a hybrid structure in identification law; learning the parameterization of matched and unmatched uncertainty simultaneously. As the unmatched uncertainty estimation error falls below a determinable and sufficiently small bound, these estimates are used to recompute the reference model. Meanwhile, a direct adaptive control architecture ensures the desired performance of the system. The controller presented in \cite{xargay2010l1},\cite{quindlen2015hybrid}  are restricted to the domain of linear in state uncertainties. This assumption is very restrictive and narrows the analysis to only a class of structured and linearly parametrized uncertainties. Also in \cite{quindlen2015hybrid}, the unmatched uncertainty estimation is based on system acceleration $\dot x(t)$ information which is usually not available directly for dynamical systems.

\subsection{Main Contribution}
We propose an architecture where online learned estimates of unmatched uncertainty are used in synthesizing the total adaptive controller to achieve desired reference model tracking.    
We leverage the fact that matched and unmatched uncertainties are in the mutual null space and do not corrupt the mutual information in the tracking error. The parametric estimates of both matched and unmatched uncertainties are captured using only tracking error information $e(t) = x(t) - \hat x(t)$ and unlike \cite{quindlen2015hybrid} we do not need the information of $\dot x(t)$ for unmatched weight update.
The controller presented in this paper caters to a broad class of nonlinear systems with matched and unmatched uncertainties. We treat both uncertainties as unstructured and thereby any generic neural network representation with non-linear basis functions $\phi(x)$ is admissible.

\section{System Description}
\label{system_description}
The dynamical system considered for presenting the hybrid direct-indirect adaptive control architecture is 
\begin{equation}
\dot x(t) = f(x(t)) + \mathcal{B}u(t)
\label{eq:0}
\end{equation}

In the above equation, the function $f: \mathbb{R}^n \to \mathbb{R}^n$ is assumed to be Lipschitz continues in $x,\dot x \in \mathcal{D}_x$. Let $ \mathcal{D}_x \subset \mathbb{R}^n$ be compact and the control $u(t)$ is assumed to belong to a set of admissible control inputs of measurable and bounded functions, ensuring the existence and uniqueness of the solution to \eqref{eq:0}.

By adding and subtracting $Ax(t)$, the nonlinear control affine model \eqref{eq:0} can be written in terms of the designed nominal plant model as 
\begin{eqnarray}
\dot x(t) &=& Ax(t) + \mathcal{B}u(t)+ (f(x(t)) - Ax(t)) \\
\dot x(t) &=& Ax(t) + \mathcal{B}\left(u(t) + \Delta_m(x)\right)+\mathcal{B}_u \Delta_{u}(x)
\label{eq:1}
\end{eqnarray}

Where $x(t) \in \mathcal{D}_x \subset \mathbb{R}^n$ is the state vector of the system and $u(t) \in \mathbb{R}^m$ is admissible full state feed-back control. $A \in \mathbb{R}^{n \times n}$ and $\mathcal{B} \in \mathbb{R}^{n \times m}$ are  Linear Time Invariant (LTI) nominal system matrices assumed to be known and the pair $\left(A,\mathcal{B}\right)$ is controllable. It is also assumed, the complete state information is available through system output $y(t) = x(t)$. 

The uncertainties, $\Delta_m(x) \in \mathbb{R}^m$, $\Delta_{u}(x)\in \mathbb{R}^{m_2}$  are the matched and unmatched components of the model approximation error $\Delta(x) = f(x(t)) - Ax(t)$.
The term, $\Delta_m(x)$ represents matched part of uncertainty and is within the span of input matrix $\mathcal{B}$, whereas $\Delta_{u}(x)$ represents the unmatched uncertainty and lies outside of the span of $\mathcal{B}$ (i.e., in the null-space $\mathcal{B}$). Hence, the control vector $u(t)$ is unable to cancel the effect of unmatched uncertainty on to the system \cite{narendra1978direct,narendra1979direct}. As a result, the typical direct adaptive control approach that relies on cancellation of the uncertainty through an adaptive element to achieve the desired reference behavior of the system are often found inadequate \cite{hovakimyan2010ℒ1,tao2003adaptive,narendra2012stable,lavretsky2012robust,narendra1986robust}.

\textbf{\emph{Remark 1:}} From the above statement, it can be inferred the matrix $\mathcal{B}_u \in \mathbb{R}^{n \times m_2}$ resides in the left null space of $\mathcal{B}$, i.e $\mathcal{B},\mathcal{B}_u$ are mutually orthogonal complements of each other, such that $\mathcal{B}_u^T\mathcal{B} = 0$.

From Remark-1, using the property of $\mathcal{B}$ and $\mathcal{B}_u$, we can define two full column rank matrices, $\mathcal{R}$ and $\mathcal{N}$ s.t $\mathcal{R}$ spans range space $\mathcal{R}(\mathcal{B})$ and $\mathcal{N}$ spans left null space $\mathcal{N}(\mathcal{B}^T)$ of $\mathcal{B}$. Hence, the matrices $\mathcal{B}$ and $\mathcal{B}_u$ can be represented as linear combinations of component vectors of $\mathcal{R}$ and $\mathcal{N}$ as follows,
\begin{equation}
\mathcal{B} = \mathcal{R}\alpha, \hspace{2mm} \mathcal{B}_u = \mathcal{N}\beta
\end{equation}

We can re-write the system dynamics (\ref{eq:1}) using projections of $\mathcal{B}$ and $\mathcal{B}_u$ onto the range space and left null space of $\mathcal{B}$ as follows
\begin{eqnarray}
\dot x(t) &=& Ax(t) + \mathcal{B}u(t) + \mathcal{R}\alpha\Delta_m(x) + \mathcal{N}\beta\Delta_{u}(x)\\
\dot x(t) &=& Ax(t) + \mathcal{B}u(t) + \left[\begin{array}{cc}
\mathcal{R} & \mathcal{N}
\end{array}\right]\left[\begin{array}{c}
\alpha\Delta_m(x)\\ 
\beta\Delta_{u}(x)
\end{array} \right]
\end{eqnarray}
Defining the matrix $\Omega = \left[\mathcal{R} \hspace{2mm} \mathcal{N}\right] \in \mathbb{R}^{n \times n}$, we can note that ``$\Omega$" from definition of $\mathcal{R},\mathcal{N}$, is a full column rank matrix, i.e. $Rank(\Omega) = n$

The system dynamics with joint representation of uncertainty can be written as follows,
\begin{equation}
\dot x(t) = Ax(t) + \mathcal{B}u(t) + \Delta(x)
\label{eq:unified uncertainty dynamics}
\end{equation}
where $$\Delta(x) =  \Omega\left[\begin{array}{c}
\alpha\Delta_m(x)\\ 
\beta\Delta_{u}(x)
\end{array} \right]$$

\section{Overview of the Presented Method}
The presented hybrid architecture of the controller for the system with matched and unmatched uncertainties uses the combination of both direct and indirect adaptive controllers. This hybrid direct-indirect approach is required since a direct adaptive controller alone cannot cancel the effect of unmatched uncertainty.  The total controller is realized through a two-step process (1) Learning observer model, such that $\hat x(t) \to x(t)$ and (2) Reference model tracking as $x(t) \to x_{rm}(t)$. 

The goal of the direct adaptive controller is to enforce the uncertain system to track a stable reference model, that characterizes the desired closed-loop response. The details of the direct adaptive controller are provided in Section \ref{direct}.

The reference model is assumed to be linear and therefore the desired transient and steady-state performance is defined by a selecting the system eigenvalues in the negative half plane. The desired closed-loop response of the reference system is given by
\begin{equation}
\dot x_{rm}(t) = A_{rm}x_{rm}(t) + B_{rm}r(t)
\label{eq:ref model}
\end{equation}
where $x_{rm}(t) \in \mathcal{D}_x  \subset \mathbb{R}^{n}$ and $A_{rm} \in {\rm I\!R}^{n \times n}$ is Hurwitz. Furthermore, the command $r(t)$ denotes a bounded, piece-wise continuous, reference signal and we assume the reference model (\ref{eq:ref model}) is bounded input-bounded output (BIBO) stable \cite{ioannou1988theory}.

The observer plant model provides the estimates of the uncertainties in the system. The details of the companion observer plant model are provided in Section \ref{observer_dynamics}. The $\ell_2$-norm of the observer tracking error is used as a measure of confidence on the estimation of total uncertainty. Upon convergence of the observer, that is as the tracking error falls below a determinable sufficiently small threshold ``$\gamma$", the unmatched part of the total uncertainty is remodeled into state dependent coefficient (SDC) \cite{cloutier1997state} form to augment the nominal dynamics. The details of the SDC formulation of unmatched uncertainty is given in Section \ref{SDC}. Further, this augmented model is used for the synthesis of the direct-indirect adaptive controller for the nonlinear system to ensure the required reference model tracking.

\section{Adaptive Identification}
\label{adaptive_identification}
This section provides details of the companion observer model and adaptive identification law for estimating the total uncertainty present in the system. The unmatched terms learned online using a system identifier approach are accounted for, in the total controller for the reference model tracking (Indirect Adaptive Controller). We show the guaranteed boundedness of the observer tracking errors near zero solution and network parameters under adaptive identification law \eqref{eq:18} 

\subsection{System Observer Model}
\label{observer_dynamics}
Based on the system dynamics \eqref{eq:unified uncertainty dynamics}, consider a Luenberger state observer, of the form.
\begin{eqnarray}
\dot{\hat x}(t) &=& A\hat x(t) + Bu(t) + \hat \Delta(x) + L_{\tau}\left(x(t)-\hat x(t)\right) \label{eq:6}
\end{eqnarray}
where $\hat x(t) \in  \mathbb{R}^n$ is state of the observer model. The term $\hat \Delta(x) \in  \mathbb{R}^n$ in \eqref{eq:6} represent the estimate of the total uncertainty in \eqref{eq:unified uncertainty dynamics} and
$L_{\tau}$ is the observer feedback gain. This feedback term helps placing the poles of observer tracking error dynamics to desired location further away from poles of reference plant, to make the observer tracking error dynamics faster \cite{lavretsky2012robust}. This condition is particularly helpful if using the observer information in the control synthesis.

The true uncertainty $\Delta(x)$ in unknown, but it is assumed to be continuous over a compact domain $\mathcal{D}_x \subset \mathbb{R}^n$. Neural Networks (NN) have been widely used to represent unstructured uncertainties whose basis is not known.  Using NN, the network estimate of the uncertainty can be written as  
\begin{equation}
\hat \Delta(x) \triangleq \hat W^T\phi(x)
\label{eq:nn_estimate_defn}
\end{equation}
where $\hat W \in \mathbb{R}^{k \times n}$ are network weights and $\phi(x) = [1,\phi_1(x),\phi_1, \ldots,\phi_k(x)]^T$ is a $k$ dimensional vector of chosen basis function. The basis vector $\phi(x)$ is considered to be Lipschitz continuous to ensure the existence and uniqueness of the solution (\ref{eq:unified uncertainty dynamics}).

From definition of the structure of total uncertainty (\ref{eq:unified uncertainty dynamics}) the estimate of the individual components of matched and unmatched uncertainty can be expressed as,
\begin{equation}
\Omega\left[\begin{array}{c}
\alpha \hat \Delta_m(x)\\ 
\beta \hat \Delta_{u}(x)
\end{array} \right] \triangleq \hat \Delta(x) = \hat W^T\phi(x)
\end{equation}

\begin{equation}
\Rightarrow \left[\begin{array}{c}
\alpha \hat \Delta_m(x)\\ 
\beta \hat \Delta_{u}(x)
\end{array} \right] = \Omega^{-1}\hat W^T\phi(x) = \xi(x)
\end{equation}

Therefore the estimate of matched and unmatched uncertainty are as follows,
\begin{eqnarray}
\hat \Delta_m(x) &=& \alpha^\dagger\xi_m(x)\\
\hat \Delta_u(x) &=& \beta^\dagger\xi_u(x)
\end{eqnarray}
where $\alpha^\dagger,\beta^\dagger$ are the left pseudo-inverse of $\alpha,\beta$. The projection of uncertainty on the range and the null space of $\mathcal{B}$ is represented by $\xi(x) = [\xi_m(x), \xi_u(x)]^T \in \mathbb{R}^n$ where $\xi_m(x) \in {\rm I\!R}^{r}$, $\xi_u(x) \in  \mathbb{R}^{n-r}$ where $r$ being column rank of $\mathcal{B}$

Appealing to the universal approximation property of NN \cite{park1991universal} 
we have that, given a fixed number of basis
functions $\phi(x) \in \mathbb{R}^k$ there exists ideal weights $W^* \in \mathbb{R}^{k \times n}$ and $\epsilon(x) \in \mathbb{R}^{n}$ such that the following
approximation holds
\begin{equation}
\Delta(x) = W^{*T}\phi(x) + \epsilon(x), \hspace{2mm} \forall x(t) \in \mathcal{D}_x \subset \mathbb{R}^{n}
\label{eq:3}
\end{equation}
The network approximation error $\epsilon(x)$ is upper bounded, s.t $\bar \epsilon = \sup_{x \in \mathcal{D}_x}\|\epsilon(x)\|$, and can be made arbitrarily small given sufficiently large number of basis functions.

\textbf{\emph{Assumption 1:}} For uncertainty parameterized by unknown true weight $W^* \in \mathbb{R}^{k \times n}$ and known nonlinear basis $\phi(x)$,
the ideal weight matrix is assumed to be upper bounded s.t $\|W^*\| \leq \mathcal{W}$.

Substituting \eqref{eq:nn_estimate_defn} in \eqref{eq:6}, the observer plant can be written as
\begin{eqnarray}
\dot{\hat x}(t) &=& A\hat x(t) + Bu(t) + \hat W^T\phi(x) + L_{\tau}\left(x(t)-\hat x(t)\right) \label{eq:7} \nonumber \\
\end{eqnarray}
The observer model tracking error is defined as 
\begin{equation}
e(t) = x(t)-\hat x(t) 
\end{equation}
Using (\ref{eq:unified uncertainty dynamics}, \ref{eq:6}) the tracking error dynamics can be written as
\begin{equation}
\dot e(t) = \dot x(t) - \dot{\hat x}(t)
\label{eq:13}
\end{equation}
\begin{equation}
\dot e(t) = \left(A - L_{\tau}\right)e(t) + \tilde W^T\phi(x) +  \epsilon(x)
\label{eq:14}
\end{equation}
where 
$L_{\tau} = diag\left[l_{\tau 1} \ldots l_{\tau n}\right] \in \mathbb{R}^{n \times n}$ is the tracking error feed-back gain in (\ref{eq:7}). Hence the observer tracking error dynamics can be written as
\begin{equation}
\dot e(t) = A_\tau e(t) + \tilde W^T\phi(x) +  \epsilon(x)
\label{eq:15}
\end{equation}
where $A_\tau = \left(A - L_{\tau}\right)$ is Hurwitz s.t $\lambda_{min}(A_\tau) < \lambda_{min}(A_{rm})$, where $\lambda_{min}(.)$ are minimum eigen values of $A_{\tau}$ and $A_{rm}$.
\subsection{Online Parameter Estimation law}
\label{Identification}
The estimate to unknown true network parameters $W^*$ are evaluated on-line using gradient descent algorithm; correcting the weight estimates in the direction of minimizing the instantaneous tracking error $e(t) = x(t)-\hat x(t)$.  The resulting update rule for network weights in estimating the total uncertainty in the system is as follows
\begin{equation}
\dot {\hat W} = \Gamma Proj(\hat W,\phi(x)e(t)'P) \label{eq:18}
\end{equation} 

\subsubsection{Lyapunov Analysis}
The on-line adaptive identification law (\ref{eq:18}) guarantees the asymptotic convergence of the observer tracking errors $e(t)$ and parameter error $\tilde W(t)$ under the condition of persistency of excitation \cite{aastrom2013adaptive,ioannou1988theory} for the structured uncertainty. Under the assumption of unstructured uncertainty, we show tracking error is uniformly ultimately bounded (UUB).

\begin{theorem}
    Consider the actual and observer plant model (\ref{eq:unified uncertainty dynamics}) \& (\ref{eq:7}). If the weights parameterizing total uncertainty in the system are updated according to identification law \eqref{eq:18}, then the observer tracking error and error in network weights $\|e\|$, $\|\tilde W\|$ are bounded.
\label{Theorem-1}
\end{theorem}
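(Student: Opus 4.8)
The plan is to prove boundedness with a single composite Lyapunov function that combines the quadratic observer-error energy with the parameter-error energy, exploiting two facts: that $A_\tau$ is Hurwitz (so a Lyapunov matrix exists), and that the gradient structure of the update law \eqref{eq:18} makes the state/parameter cross-terms cancel, after which the projection operator supplies the remaining bound on $\tilde W$. First I would fix any $Q=Q^T>0$ and let $P=P^T>0$ be the unique solution of the Lyapunov equation $A_\tau^T P + P A_\tau = -Q$, which exists because $A_\tau$ is Hurwitz. Writing $\tilde W = W^*-\hat W$ (consistent with the sign in \eqref{eq:15}, so that $\dot{\tilde W}=-\dot{\hat W}$), I would propose the radially unbounded, positive-definite candidate
$$V(e,\tilde W) = e^T P e + \mathrm{tr}\!\left(\tilde W^T \Gamma^{-1}\tilde W\right).$$

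Next I would differentiate $V$ along the error dynamics \eqref{eq:15} and the update law \eqref{eq:18}. Substituting $\dot e = A_\tau e + \tilde W^T\phi(x)+\epsilon(x)$ and using the Lyapunov equation, the state part gives $\dot e^T Pe + e^T P\dot e = -e^T Q e + 2e^T P\tilde W^T\phi(x) + 2e^T P\epsilon(x)$. For the adaptation part, $2\,\mathrm{tr}(\tilde W^T\Gamma^{-1}\dot{\tilde W}) = -2\,\mathrm{tr}(\tilde W^T\Gamma^{-1}\dot{\hat W})$; temporarily dropping the projection so that $\dot{\hat W}=\Gamma\phi e^T P$, this term equals $-2\,\mathrm{tr}(\tilde W^T\phi e^T P) = -2e^T P\tilde W^T\phi(x)$, which exactly cancels the indefinite cross-term above. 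I would then reinstate the projection by invoking its two standard properties: it confines $\hat W$ to a prescribed convex compact set, which bounds $\|\tilde W\|$ directly; and it satisfies $\mathrm{tr}\!\left[(W^*-\hat W)^T(\mathrm{Proj}(\hat W,Y)-Y)\right]\le 0$, so that reintroducing it can only make $\dot V$ more negative.

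With the cross-terms eliminated and $\|\epsilon(x)\|\le\bar\epsilon$ on the compact domain $\mathcal{D}_x$, I would arrive at the inequality
$$\dot V \le -\lambda_{\min}(Q)\,\|e\|^2 + 2\,\|P\|\,\bar\epsilon\,\|e\|,$$
which is negative whenever $\|e\| > b := 2\|P\|\bar\epsilon/\lambda_{\min}(Q)$. Hence $\dot V<0$ outside a ball of radius $b$ in the $e$-coordinate, establishing that the observer tracking error $\|e\|$ is uniformly ultimately bounded; combined with the projection bound on $\|\tilde W\|$, both quantities in the theorem statement are bounded. A standard invariant-set argument then yields the ultimate bound in terms of the level sets of $V$.

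The main obstacle I anticipate is the careful treatment of the projection operator: establishing the sign inequality that guarantees its contribution to $\dot V$ is nonpositive, and arguing that it keeps $\hat W$ (hence $\tilde W$) in a compact set, since the cancellation of cross-terms removes any explicit $\tilde W$ dependence from $\dot V$ and so the projection is the \emph{only} mechanism bounding the weights. A secondary technical point is ensuring that the trajectory $x(t)$ remains within $\mathcal{D}_x$ so that the uniform approximation bound $\bar\epsilon$ is valid throughout; this must either be assumed or guaranteed by the derived ultimate bound closing back on the compact domain.
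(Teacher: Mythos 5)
Your proposal follows essentially the same route as the paper's own proof: the same composite Lyapunov candidate $e^TPe$ plus a $\Gamma^{-1}$-weighted parameter-error term, the same cancellation of the cross-term $2e^TP\tilde W^T\phi(x)$ via the gradient/projection update \eqref{eq:18}, and the same resulting inequality $\dot V \le -\lambda_{\min}(Q)\|e\|^2 + 2\lambda_{\max}(P)\bar\epsilon\|e\|$ yielding the ultimate bound $2\lambda_{\max}(P)\bar\epsilon/\lambda_{\min}(Q)$; indeed you are more explicit than the paper about the projection operator's role (the paper substitutes \eqref{eq:18} and drops the cross-term outright) and about keeping $x(t)\in\mathcal{D}_x$. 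One caveat: your stated projection inequality $\mathrm{tr}\left[(W^*-\hat W)^T(\mathrm{Proj}(\hat W,Y)-Y)\right]\le 0$ has the sign reversed for your convention $\tilde W = W^*-\hat W$ (the standard property gives $\ge 0$ there, which is what makes the projection's contribution to $\dot V$ nonpositive), but your conclusion is the correct one, so this is a notational slip rather than a gap.
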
 

\begin{proof}
Let $V(e,\tilde W) > 0$ be a differentiable, positive definite radially unbounded Lyapunov candidate function,
\begin{equation}
V(e,\tilde W) = e^TPe + \frac{\tilde W^T \Gamma^{-1} \tilde W}{2}
\label{eq:20}
\end{equation}
where $\Gamma >0$ is the adaption rate. The time derivative of the lyapunov function (\ref{eq:20}) along the trajectory (\ref{eq:15}) can be evaluated as
\begin{equation}
\dot V(e,\tilde W) = \dot e^TPe + e^TP \dot e - \tilde W^T\Gamma^{-1}\dot{\hat W}
\label{eq:25}
\end{equation}
\begin{eqnarray}
\dot V(e,\tilde W) &=& -e^TQe + \left(\phi(x)e'P - \Gamma^{-1}\dot{\hat W}\right)2\tilde W \nonumber\\
&& + 2e^TP\epsilon(x) 
\label{eq:21}
\end{eqnarray}
for $P = P^T >0$ and $A_{\tau}$ be Hurwitz matrix, $P$ is the solution of lyapunov equation $A^T_{\tau}P + PA_{\tau} = -Q$ for some $Q > 0$.
 
Using the expressions for weight update rule (\ref{eq:18}) in (\ref{eq:21}), the time derivative of the lyapunov function reduces to 
\begin{eqnarray}
\dot V(e,\tilde W) &=& -e^TQe + 2e^TP\epsilon(x)
\label{eq:22}
\end{eqnarray}
\begin{eqnarray}
\dot V(e,\tilde W) &\leq& -\lambda_{min}(Q)e^Te + 2\lambda_{max}(P)\bar\epsilon e 
\label{eq:23}
\end{eqnarray}
Hence $\dot V(e,\tilde W) \leq 0$ outside compact neighborhood of the origin $e = 0$, for some sufficiently large $\lambda_{min}(Q)$. 
\begin{equation}
\|e(t)\| \geq \frac{2\lambda_{max}(P)\bar\epsilon}{\lambda_{min}(Q)}
\label{eq:error_bound}
\end{equation}
Hence the tracking tracking error $\|e(t)\|$ is uniformly lower bounded. Furthermore, from the BIBO assumption $x_{rm}(t)$ is bounded for bounded reference signal $r(t)$, thereby $x(t)$ remains bounded.  Since $V(e,\tilde W)$  is radially unbounded the result holds for all $x(0) \in \mathcal{D}_x$. We note that, the second derivative of Lyapunov function
\begin{equation}
\ddot{V}(e,\tilde W) = -2\lambda_{min}(Q)(e \dot{e}) + 2\lambda_{max}(P)\bar\epsilon \dot{e}
\end{equation}
is bounded due the fact that $\tilde{W}$ is bounded through projection operator in weight update rule \cite{ioannou1996robust} and $\bar{\epsilon}$ is finite constant, hence 
from Lyapunov theory and Barbalat’s Lemma \cite{narendra2012stable}, we can state that $\dot{V}(e,\tilde{W})$ is uniformly continuous hence $\dot{V}(e,\tilde{W}) \to 0$ as $t \to \infty$.  Using the previous fact with lower bound on error (\ref{eq:error_bound}) we show that $e(t)$ is uniformly ultimately bounded near to zero solution.
\end{proof}

\section{Hybrid Model Reference Adaptive Controller}
\label{reference_model}
This section provides the details of the hybrid architecture of the controller for the system with uncertainties. The total controller architecture uses both direct and indirect controller approach to handle system with matched and unmatched uncertainties.

\subsection{Indirect Adaptive Controller}
\label{Indirect}
An indirect adaptive control approach uses the estimate of unknown parameters or uncertainties of the plant through a companion observer model (\ref{eq:6}) to adjust the control parameters to achieve desired reference model tracking \cite{aastrom2013adaptive}.

Subjected to assumptions of boundedness and non-destabilizing effects; the unmatched uncertainties can be addressed through indirect adaptive feed-back $(K^\sigma(t)x(t))$ and feed-forward controller$(K_r^\sigma r(t))$ designed for an augmented plant model. The unmatched uncertainty estimates can be remodeled into an equivalent state dependent linear form, augmenting the nominal plant, for which the controller $K^\sigma(t), K_r^\sigma$ are designed. This architecture leads to global performance in reference model tracking to the original nonlinear system. 
  
\subsection{Direct Adaptive Controller}
\label{direct}
The direct adaptive controller $\nu_{ad}$ aims at canceling the matched uncertainties, and to minimize the reference model tracking error by ensuring $x(t) \to x_{rm}(t)$.

The total controller comprises of both indirect controller term $u_{pd} = K^\sigma (t)x(t)$, a feed-forward term $u_{rm} = K_r^\sigma r(t)$ and an direct adaptive element $\nu_{ad}$. The total controller $u(t)$ can be written as 
\begin{equation}
u(t) = u_{pd}+u_{rm}-\nu_{ad}
\label{eq:Adaptive Control}
\end{equation}
\begin{equation}
\text{where                    }\nu_{ad} = \hat \Delta_m(x)
\end{equation}
substituting the controller $u(t)$ in plant model (\ref{eq:1}) we get
\begin{small}
    \begin{eqnarray}
    \dot {x}(t)  &=& Ax(t) + \mathcal{B}\left(K^\sigma(t)x(t) + K_r^\sigma r(t)-\nu_{ad} + \Delta_m(x)\right) \nonumber \\
    &&+\mathcal{B}_u\Delta_u(x)
    \label{eq:26}
    \end{eqnarray}
\end{small}
The feedback and feed-forward gain $K^\sigma (t)$ and $K_r^\sigma$ are designed to ensure the matching condition holds under switching system. Such that the closed loop poles of the system match the poles of reference model,
\begin{eqnarray}
\lambda(A_{rm}) &=& \lambda(\Psi_{\sigma}(\hat x)+\mathcal{B}K^\sigma(t)), \hspace{3mm} \forall t \label{eq:matching_condition1} \\
\mathcal{B}_{rm} &=& \mathcal{B}K_r^\sigma
\label{eq:matching_condition2}
\end{eqnarray} 
where
$$\Psi_{\sigma}(\hat x) = \begin{cases}
A \hspace{16mm} \sigma(t) \geq \gamma\\ 
A+A'(\hat x) \hspace{3mm} \sigma(t) \leq \gamma
\end{cases}$$

Where $A'(x) \in \mathbb{R}^{n \times n}$ is the state-dependent coefficient (SDC)  form of the unmatched uncertainty $\mathcal{B}_u\Delta_u(x)$. The details of remodelling unmatched uncertainty into SDC form is presented in further sections.

Pole placement method is used to calculate the feedback gain $K^\sigma(t)$, for which we require the pair $(\Psi_{\sigma}, \mathcal{B})$ be controllable. We prove the pair is always controllable in the Theorem-4. 

The feed-forward gain $K_r^\sigma$ is calculated using the expression
\begin{equation}
K_r^\sigma = (\mathcal{B}^T\mathcal{B})^{-1}\mathcal{B}^T\mathcal{B}_{rm}
\label{eq:feedforward_gain}
\end{equation}
For the solution of expression (\ref{eq:feedforward_gain}) to exist, we need $\mathcal{B} \in \mathbb{R}^{n \times m}$ to be full column rank.
For the case when $m \leq n$, this condition on $\mathcal{B}$ is usually satisfied for most dynamical systems. This assumption can be restrictive when $m > n$, in such cases we can use the pseudo-inverse approach to generate $K_r^\sigma$, such that the matching condition holds \cite{gao1992reconfigurable}.

\textbf{Condition 1} The feedback gain $K^\sigma(t)$ and hence reference model $A_{rm} = A+\mathcal{B}K^\sigma(t)$ is chosen such that $A_{rm}$ is Hurwitz and system is robust to effects of unmatched uncertainty (before the switching condition is triggered). This condition ensures the tracking error remain bounded before the controller switches to cater to the augmented model.

From \textbf{Assumption 1} the matched and unmatched uncertainty components can be upper bounded as,
$$\left[\begin{array}{c}
\bar{\xi}_m \\ 
\bar{\xi}_u
\end{array} \right] = \Omega^{-1}\mathcal{W}^T\phi(x) $$ 

\textbf{Remark 2}
For the system \eqref{eq:26} and given upper bounded $\|\Delta_u(x)\|_\infty \leq \bar{\xi}_u$, a suitable reference model can be selected using sector bounds and Linear Matrix Inequalities \cite{yang2010lmi} from robust control theory to ensure \textbf{Condition 1} holds.\\

\begin{theorem}
Given the actual and reference plant model \eqref{eq:1} \& \eqref{eq:ref model} respectively. We show that the controller of the form \eqref{eq:Adaptive Control} is the admissible controller for the system \eqref{eq:1} and the reference model tracking error $\|e_{rm}\|$ is uniformly ultimately bounded(UUB), 
\end{theorem}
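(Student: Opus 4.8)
The plan is to derive closed-loop dynamics for the reference-model tracking error $e_{rm}(t) = x(t) - x_{rm}(t)$ in the post-switch regime ($\sigma(t)\le\gamma$), where the unmatched uncertainty has been recast in SDC form, and then to show that these dynamics form a Hurwitz linear system driven by two bounded perturbations. Substituting the total controller \eqref{eq:Adaptive Control} with $\nu_{ad}=\hat\Delta_m(x)$ into \eqref{eq:26}, replacing the unmatched term $\mathcal{B}_u\Delta_u(x)$ by its SDC representation $A'(x)x$, and adding and subtracting $A'(\hat x)x$ to expose the controller-side matrix $A'(\hat x)$, I would obtain
\begin{equation}
\dot x = \big(A + A'(\hat x) + \mathcal{B}K^\sigma\big)x + \mathcal{B}K_r^\sigma r + \mathcal{B}\tilde\Delta_m(x) + \big(A'(x)-A'(\hat x)\big)x,
\end{equation}
where $\tilde\Delta_m = \Delta_m-\hat\Delta_m$. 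The matching conditions \eqref{eq:matching_condition1}--\eqref{eq:matching_condition2}, read as $A+A'(\hat x)+\mathcal{B}K^\sigma = A_{rm}$ and $\mathcal{B}K_r^\sigma = \mathcal{B}_{rm}$, then collapse the nominal part onto the reference model \eqref{eq:ref model}; subtracting \eqref{eq:ref model} leaves the tracking-error dynamics
\begin{equation}
\dot e_{rm} = A_{rm}e_{rm} + \mathcal{B}\tilde\Delta_m(x) + \big(A'(x)-A'(\hat x)\big)x .
\label{eq:ermdyn}
\end{equation}

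Next I would bound the two forcing terms on the compact set $\mathcal{D}_x$. The matched residual $\tilde\Delta_m$ is bounded: by Assumption~1 and the projection operator in \eqref{eq:18}, $\tilde W$ is bounded (established in Theorem~\ref{Theorem-1}), $\phi(x)$ is bounded on $\mathcal{D}_x$, and $\bar\epsilon$ adds a finite constant, so $\|\mathcal{B}\tilde\Delta_m(x)\|\le d_1$. For the SDC term I would invoke Lipschitz continuity of the parameterization $A'(\cdot)$ (inherited from the Lipschitz basis $\phi(x)$), giving $\|A'(x)-A'(\hat x)\|\le L_{A'}\|x-\hat x\| = L_{A'}\|e(t)\|$; since Theorem~\ref{Theorem-1} guarantees $\|e(t)\|$ is UUB with ultimate bound $\bar e$ and $x$ remains in $\mathcal{D}_x$, this term is bounded by $d_2 = L_{A'}\,\bar e\,\sup_{\mathcal{D}_x}\|x\|$. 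Hence the aggregate disturbance obeys $\|\mathcal{B}\tilde\Delta_m(x)+(A'(x)-A'(\hat x))x\|\le d_1+d_2=:d$.

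With $A_{rm}$ Hurwitz I would then run a standard Lyapunov argument on \eqref{eq:ermdyn}. Taking $V_{rm}=e_{rm}^TP_{rm}e_{rm}$ with $A_{rm}^TP_{rm}+P_{rm}A_{rm}=-Q_{rm}$, $Q_{rm}>0$, differentiation along \eqref{eq:ermdyn} yields
\begin{equation}
\dot V_{rm}\le -\lambda_{min}(Q_{rm})\|e_{rm}\|^2 + 2\lambda_{max}(P_{rm})\,d\,\|e_{rm}\|,
\end{equation}
so $\dot V_{rm}<0$ whenever $\|e_{rm}\|>2\lambda_{max}(P_{rm})d/\lambda_{min}(Q_{rm})$, which is exactly uniform ultimate boundedness of $e_{rm}$; since the switch fires when the observer confidence $\sigma(t)\le\gamma$ and $d$ scales with $\bar e$, tightening $\gamma$ shrinks the residual set. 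Admissibility follows in parallel: $x\in\mathcal{D}_x$ and $x_{rm}$ are bounded, the pole-placement gains $K^\sigma,K_r^\sigma$ are finite (well-defined since $(\Psi_\sigma,\mathcal{B})$ is controllable by Theorem~4 and $\mathcal{B}$ is full column rank in \eqref{eq:feedforward_gain}), and $\nu_{ad}=\hat\Delta_m(x)$ is bounded on $\mathcal{D}_x$, so $u(t)$ is measurable and bounded.

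The main obstacle is the switched nature of $\Psi_\sigma(\hat x)$: the analysis above pertains to the post-switch mode, and I must separately argue boundedness before the switch and continuity of the guarantee across it. Before the switch ($\sigma(t)\ge\gamma$) the controller operates on the nominal $A$ and, by Condition~1 and Remark~2, $A_{rm}$ is kept Hurwitz and robust to the still-unmodeled $\mathcal{B}_u\Delta_u(x)$ via sector bounds and LMIs, so $e_{rm}$ stays bounded until the observer error drops below $\gamma$. To stitch the regimes together without a destabilizing jump I would either exhibit a common Lyapunov function for both values of $\Psi_\sigma$ or impose a dwell-time on $\sigma(t)$. A secondary subtlety is that \eqref{eq:matching_condition1} is written as eigenvalue (pole) matching rather than the matrix equality $A+A'(\hat x)+\mathcal{B}K^\sigma = A_{rm}$ used to identify the state matrix in \eqref{eq:ermdyn}; closing the bound cleanly requires the stronger equality (or a similarity transform relating the two), which I would state explicitly.
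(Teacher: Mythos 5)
Your proposal is correct in substance and shares the paper's skeleton---derive the reference-model tracking-error dynamics, then run a quadratic Lyapunov/UUB argument with $A_{rm}^TP + PA_{rm} = -Q$---but your decomposition of the disturbance terms is genuinely different from, and more careful than, the paper's. The paper's proof treats both switching regimes in one case analysis: for $\sigma(t)\ge\gamma$ it keeps $\mathcal{B}_u\Delta_u$ as a disturbance bounded by $\bar\xi_u$, and for $\sigma(t)\le\gamma$ it simply deletes the unmatched term from the error dynamics, implicitly assuming the SDC augmentation cancels it exactly; it then bounds $\Delta_m-\nu_{ad}$ by $\bar\xi_m$ from Assumption~1 and invokes Barbalat's Lemma. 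You instead keep the post-switch residuals explicitly---$\mathcal{B}\tilde\Delta_m(x)$, bounded through the projection-bounded $\tilde W$ of Theorem~\ref{Theorem-1}, and the SDC mismatch $\bigl(A'(x)-A'(\hat x)\bigr)x$, bounded through Lipschitz continuity of $A'(\cdot)$ and the observer-error bound---so your ultimate bound correctly shrinks as the observer estimate improves, whereas the paper's bound, written in terms of the raw uncertainty bounds $\bar\xi_m,\bar\xi_u$, reflects no benefit of adaptation at all. You also supply two things the paper's proof omits despite their appearing in the theorem statement: the admissibility half of the claim (measurability and boundedness of $u(t)$ via finite gains and bounded $\nu_{ad}$), and the observation that the eigenvalue-matching condition \eqref{eq:matching_condition1} must be strengthened to the matrix equality $\Psi_\sigma(\hat x)+\mathcal{B}K^\sigma = A_{rm}$ for the error dynamics to close onto $A_{rm}$---a real gap, since the paper's own derivation of its two-case error dynamics tacitly uses exactly that equality. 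Your remaining concern about stitching the pre- and post-switch regimes (common Lyapunov function or dwell time) is legitimate but is something the paper defers to its separate switched-systems theorem (Theorem~3, citing Hespanha) rather than resolving inside this proof; flagging that the proof of UUB here covers each mode but not arbitrary switching between them is a fair criticism of the original as well.
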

\vspace{2mm}
\begin{proof}
Let the reference model tracking error $e_{rm}$  is defined as:
\begin{equation}
e_{rm} = x(t) - x_{rm}(t)
\label{eq:ref_model_tracking_err}
\end{equation}
Taking time derivative of \eqref{eq:ref_model_tracking_err} and using \eqref{eq:1} \& \eqref{eq:ref model}, the reference model tracking error rate can be written as
\begin{eqnarray}
\dot e_{rm} &=& Ax(t) +  \mathcal{B}(u(t) +\Delta_m(x))+\mathcal{B}_u\Delta_u(x) \nonumber \\
&&- A_{rm}x_{rm} - \mathcal{B}_{rm}r(t)
\end{eqnarray}
Using the controller \eqref{eq:Adaptive Control} and under the assumption of switched systems, the above equation can be written as,
\begin{equation}
\dot e_{rm} = \begin{cases}
A_{rm}e_{rm} + \mathcal{B}(\Delta_m -\nu_{ad}) + \mathcal{B}_u\Delta_u, \hspace{5mm} \sigma(t) \geq \gamma\\ 

A_{rm}e_{rm} + \mathcal{B}(\Delta_m-\nu_{ad}), \hspace{18mm} \sigma(t) \leq \gamma
\end{cases}
\label{eq:ref_error-_dynamics}
\end{equation}
Let $V(e_{rm}) > 0$ be a differentiable, positive definite radially unbounded Lyapunov candidate function,
\begin{equation}
V(e_{rm}) = e_{rm}^TPe_{rm}
\label{eq:ref_model_lyapunov}
\end{equation}
The time derivative of the lyapunov function (\ref{eq:ref_model_lyapunov}) along the trajectories (\ref{eq:ref_error-_dynamics}) can be evaluated as
\begin{equation}
\dot V(e_{rm}) = \dot e_{rm}^TPe_{rm} + e_{rm}^TP \dot e_{rm}
\end{equation}
for $P = P^T >0$ and Hurwitz matrix $A_{rm}$, $P$ is the solution of lyapunov equation $A^T_{rm}P + PA_{rm} = -Q$ for some $Q > 0$.

Hence $\dot V(e_{rm}) \leq 0$ outside compact neighborhood of the origin $e_{rm} = 0$, for some sufficiently large $\lambda_{min}(Q)$. 
\begin{equation}
\|e_{rm}(t)\|_\infty \geqslant \begin{cases}
\frac{\mathcal{B} \bar \xi_m + \mathcal{B}_u \bar \xi_u}{\lambda_{min}(Q)} \hspace{13mm} \sigma(t) \geq \gamma\\ 
\\
\frac{\mathcal{B}\bar \xi_m}{\lambda_{min}(Q)}  \hspace{16mm} \sigma(t) \leq \gamma
\end{cases}
\end{equation}
Using similar argument from Theorem-\ref{Theorem-1} and using Barbalat’s Lemma \cite{narendra2012stable} we show reference model error is UUB and this bound can be made sufficiently small by suitably choosing a faster reference model.
\end{proof}
\subsection{Switched Systems}
The observer model tracking error $e(t)$ is a direct measure of how accurate are the available instantaneous estimates of total uncertainty in the system. Hence the norm on the observer tracking error 
\begin{equation}
\sigma(t) = \|e(t)\|_2 \triangleq \|x(t)-\hat x(t)\|_2 
\end{equation}
is used as switching signal $\sigma(t)$, between nominal system and the augmented system. 

Algorithm-1 provides details of the switching approach. Tracking error $\sigma(t) = \|e(t)\|_2$ indicates the switching between nominal and augmented plant model. In step-4 algorithm tests if the $\|e(t)\|_2 \leq \gamma$,  this condition indicates we have a fairly accurate model of the actual dynamics and that the estimates of the uncertainties are close to their true values. In step-5, the unmatched part of the uncertainty is remodeled in state-dependent form. The SDC form $A'(x) \in \mathbb{R}^{n \times n}$ while capturing all the nonlinearities, represents the uncertainty in a non-unique linear structure \cite{cloutier1997state}. The SDC form of unmatched uncertainty estimate can be expressed as,
\begin{equation}
A'(x)x(t) = \mathcal{B}_u\hat \Delta_u(x)
\end{equation}
\begin{algorithm}[h!]
    \caption{Nominal Model Switch}
    \label{alg:Apprentice}
    \begin{algorithmic}[1]
        \STATE {\bfseries Input:} $\gamma$
        \STATE while time $t < \infty$
        \STATE $\|e(t)\| = \|x(t)-\hat x(t)\|$
        \IF{$\|e(t)\| \leq \gamma$}
        \STATE Model the unmatched uncertainty into equivalent SDC form $$\mathcal{B}_u\hat \Delta_u(x) = A'(x)x(t)$$
        \STATE Augment the nominal plant model $$\Psi_{\sigma}(x) = A_\tau + A'(x)$$
        \ENDIF
        \end{algorithmic}
\end{algorithm}
The SDC form of unmatched uncertainty augments nominal system $A$ to form the complete nonlinear model,
\begin{eqnarray}
\dot {\hat x}(t)  &=& \left(A + A'(\hat x)\right)\hat x(t) + B\left(u(t) + \hat \Delta_m(x)\right) \nonumber \\
&& + L_{\tau}(x(t) - \hat x(t)) \label{eq:26_1} \\
\dot {\hat x}(t)  &=& \Psi_{\sigma}(\hat x)\hat x(t) + B\left(u(t) + \hat \Delta_m(x)\right) \nonumber \\
&& + L_{\tau}(x(t) - \hat x(t))
\label{eq:26_2}
\end{eqnarray}
where $\Psi_{\sigma}(\hat x) \in \mathbb{R}^{n \times n} $ is the total SDC augmented system dynamics matrix.
\begin{equation}
\Psi_{\sigma}(\hat x) =  \left(A + A'(\hat x)\right)
\label{eq:augmented_model}
\end{equation}
As long as the condition holds $\sigma(t) \leq \gamma$ the augmented system dynamics \eqref{eq:augmented_model} is used for indirect adaptive control design for the system.

We operate in the realm of switched linear systems \cite{hespanha2004uniform} where $A_{\sigma} = (\Psi_{\sigma} + \mathcal{B}K^\sigma(t))$ belongs to set of finitely many Hurwitz matrices i.e. $A_{\sigma} \in \{A_1,A_2 \ldots\}$. Also from the matching condition \eqref{eq:matching_condition1}, we know that $A_{\sigma}$ belongs to family of matrices which share common eigen values. The stability of such a arbitrarily switched linear system is given by the following theorem.
\begin{theorem}
For switched linear system of form
\begin{equation}
\dot x(t) = A_{\sigma}x(t) + \mathcal{B}r(t)
\label{eq:switched_sys}
\end{equation}
If there exists a quadratic lyapunov function $V_{\sigma} = x(t)^TP_{\sigma}x(t)$,  such that, $P_{\sigma}$ belongs to a compact family $P_{\sigma} \in \mathcal{P} \in \mathbb{R}^{n \times n}$ of symmetric positive definite matrices, Such that, for every $x(t)$, $P_{\sigma}$ satisfies
$$A^T_{\sigma}P_{\sigma} + P_{\sigma}A_{\sigma} \leq -Q \hspace{3mm} \forall \sigma \textrm{ and } Q > 0$$
and the lie derivative $\dot V_{\sigma} < 0$ along \eqref{eq:switched_sys}, then system \eqref{eq:switched_sys} is said to be uniformly asymptotically stable.
\end{theorem}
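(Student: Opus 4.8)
The plan is to convert the per-mode Lyapunov inequalities into a single uniform stability estimate, with compactness of the family $\mathcal{P}$ serving as the engine that makes the conclusion \emph{uniform} over all admissible switching signals. First I would extract uniform eigenvalue bounds: since $\mathcal{P}$ is a compact family of symmetric positive definite matrices, the continuous maps $P \mapsto \lambda_{\min}(P)$ and $P \mapsto \lambda_{\max}(P)$ attain their extrema on $\mathcal{P}$, so there exist constants $0 < \alpha \leq \beta < \infty$ with $\alpha I \preceq P_{\sigma} \preceq \beta I$ for every $\sigma$. This yields the sandwich $\alpha\|x\|^2 \leq V_{\sigma}(x) \leq \beta\|x\|^2$ \emph{simultaneously} for all modes, which is exactly the class-$\mathcal{K}_\infty$ pair needed for uniformity, i.e. independence of the switching signal and of the initial time.

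Next I would turn the shared decrease condition into a uniform exponential rate. On any interval where mode $\sigma$ is active, the hypothesis $A^T_{\sigma}P_{\sigma} + P_{\sigma}A_{\sigma} \preceq -Q$ gives, along \eqref{eq:switched_sys} with $r = 0$, that $\dot V_{\sigma} = x^T(A^T_{\sigma}P_{\sigma} + P_{\sigma}A_{\sigma})x \leq -x^TQx \leq -\lambda_{\min}(Q)\|x\|^2$. Combining this with the upper bound $V_{\sigma} \leq \beta\|x\|^2$ produces $\dot V_{\sigma} \leq -(\lambda_{\min}(Q)/\beta)\,V_{\sigma}$, so the decay rate $\mu := \lambda_{\min}(Q)/\beta$ is identical across every mode.

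The one genuine obstacle is the switching itself: the state $x(t)$ is continuous across a switch instant $t_k$, but the \emph{index} $\sigma$ jumps, so the value $V_{\sigma}(x(t_k))$ is discontinuous and the per-mode decay above does not automatically chain across switches. To close this gap I would build a common Lyapunov function from the family, $V(x) := \sup_{\sigma} x^T P_{\sigma} x$; by compactness of $\mathcal{P}$ this supremum is finite, continuous, and inherits the same uniform sandwich $\alpha\|x\|^2 \leq V(x) \leq \beta\|x\|^2$. Because $V$ no longer carries a mode index, it does not jump at switch instants. The technical care needed is that $V$ is only piecewise smooth, being a supremum of smooth quadratics, so I would work with its upper Dini derivative $D^+V$ and use the standard fact that, at a point where the active mode $\sigma$ attains the maximum, $D^+V(x)$ along mode $\sigma$ is bounded above by $\dot V_{\sigma}(x) \leq -\mu V(x)$; when several matrices tie for the maximum, a Danskin/Clarke-gradient argument shows the estimate still holds for whichever mode is active. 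Hence $D^+V \leq -\mu V < 0$ along the switched trajectory regardless of the switching signal.

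Finally I would conclude by the comparison lemma, $V(x(t)) \leq e^{-\mu(t-t_0)}V(x(t_0))$, whence the uniform sandwich gives $\|x(t)\| \leq \sqrt{\beta/\alpha}\,e^{-(\mu/2)(t-t_0)}\|x(t_0)\|$, which is uniform (indeed global exponential) asymptotic stability of the origin independent of $\sigma$. For the forced system, the same $V$ together with the uniform bounds yields an input-to-state estimate, so boundedness of $r(t)$ carries over to bounded trajectories. I expect the nonsmooth Danskin step to be the only place requiring real care; the remaining bounds are routine consequences of compactness and the common $-Q$ inequality.
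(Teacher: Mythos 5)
Your construction of the common Lyapunov function $V(x) = \sup_{\sigma} x^T P_\sigma x$ is where the argument breaks. By the Danskin-type formula you invoke, the upper Dini derivative of $V$ along the flow of the \emph{active} mode $\sigma$ is
\[
D^+ V(x) \;=\; \max_{\tau \in M(x)} x^T\left(A_\sigma^T P_\tau + P_\tau A_\sigma\right)x ,
\]
where $M(x)$ is the set of indices attaining the supremum. The hypothesis $A_\sigma^T P_\sigma + P_\sigma A_\sigma \leq -Q$ controls only the ``diagonal'' pairs $(A_\sigma, P_\sigma)$; it says nothing about the cross terms $A_\sigma^T P_\tau + P_\tau A_\sigma$ with $\tau \neq \sigma$. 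Your argument silently assumes that the currently active mode belongs to $M(x)$ (``at a point where the active mode $\sigma$ attains the maximum''), but the mode driving the dynamics and the mode achieving the supremum of the quadratics are unrelated, and generically the active mode will \emph{not} be a maximizer. The Clarke-gradient remark only addresses ties among maximizers, not this problem. Moreover, the gap cannot be patched while using only the per-mode inequalities: every Hurwitz $A_\sigma$ admits its own $P_\sigma$ solving $A_\sigma^T P_\sigma + P_\sigma A_\sigma = -Q$, yet it is classical that switching between two Hurwitz matrices can be destabilizing; hence no proof of uniform asymptotic stability under arbitrary switching can rest on those inequalities alone. Whatever genuine content the theorem has must come from the additional (vaguely worded) hypothesis that $\dot V_\sigma < 0$ holds along the switched trajectories themselves --- a compatibility condition across modes --- which your proof never uses.

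For comparison, the paper does not prove this statement at all: its ``proof'' is a citation to Hespanha's work on uniform stability of switched linear systems, where the precise hypotheses (compatibility or non-increase of Lyapunov values at switching instants, observability-type conditions) are exactly what is needed to close the gap identified above. So your attempt is more substantive than what the paper offers, but as written it establishes a claim that is false under the hypotheses you actually invoke.
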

\begin{proof}
The proof of the following theorem is provided in \cite{hespanha2004uniform}
\end{proof}

\subsection{SDC Formulation of Unmatched uncertainty}
\label{SDC}
Upon satisfying the switching condition $(\sigma(t) \leq \gamma)$, the indirect adaptive control remodels the unmatched uncertainty estimate to state dependent matrix form  $A'(x) \in \mathbb{R}^{n \times n}$, such that $A'(x)$ augments nominal system matrix $A$ for the indirect adaptive controller design.

To represent the estimate of unmatched uncertainty in SDC form $A'(x)x(t)$ i.e.
\begin{equation}
A'(x)x(t) = \mathcal{B}_u\hat \Delta_u(x)
\label{eq:unmateched_term}
\end{equation}
Multiply and divide the term on RHS of (\ref{eq:unmateched_term}) by the term $(x(t)^Tx(t))$. To mitigate the problem of a zero in the denominator, we add a Tikhonov regularization term \cite{bauer2007regularization,Guacaneme1988} to the denominator of (\ref{eq:unmateched_term2}) and the resulting expression is as follows
\begin{equation}
A'(x)x(t) \approx \frac{\mathcal{B}_u\Delta_u(x)}{(x^Tx+\epsilon)}x^Tx
\label{eq:unmateched_term2}
\end{equation}
Equation (\ref{eq:unmateched_term2}) can be rearranged in its final SDC form as,
\begin{eqnarray}
A'(x)x(t) \triangleq \mathcal{B}_u\hat \Delta_u(x) &\approx&   \left[\frac{x\Delta^T_u(x)\mathcal{B}_u^T}{(x^Tx+\epsilon)}\right]x(t)\\
&& \nonumber \\
A'(x) &=& \left[\frac{x\Delta^T_u(x)\mathcal{B}_u^T}{(x^Tx+\epsilon)}\right]
\label{eq:unmateched_term3}
\end{eqnarray}

\begin{theorem}
    Given the pair of system matrices $\left(A,\mathcal{B}\right)$ is controllable, and if the non-unique SDC form of unmatched uncertainty is modeled as
    $$A'(x) = \left[\frac{x\Delta^T_u(x)\mathcal{B}_u^T}{(x^Tx+\epsilon)}\right]$$
    Then the augmented system matrix pair $\left(A+A'(x),\mathcal{B}\right)$ is always controllable,
\end{theorem}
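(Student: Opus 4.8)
The plan is to establish controllability of $(A+A'(x),\mathcal{B})$ through the Popov-Belevitch-Hautus (PBH) rank test rather than by comparing the full controllability (Krylov) matrices directly, since the single structural fact that makes the argument work is most transparent there. That fact is that $A'(x)$ annihilates $\mathcal{B}$ from the right: writing $A'(x)=\frac{x\,(\mathcal{B}_u\Delta_u(x))^T}{x^Tx+\epsilon}$, every row of $A'(x)$ is a scalar multiple of $(\mathcal{B}_u\Delta_u(x))^T$, and by Remark 1 we have $\mathcal{B}_u^T\mathcal{B}=0$, so $(\mathcal{B}_u\Delta_u(x))^T\mathcal{B}=0$ and hence $A'(x)\mathcal{B}=0$. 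First I would record this identity, together with the companion observation that the entire range of $A'(x)$ is the line $\mathrm{span}\{x\}$, which fixes the rank-one character of the perturbation.

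Next I would run the PBH test by contradiction. Suppose $(A+A'(x),\mathcal{B})$ is not controllable; then there exist a nonzero left eigenvector $w$ and a scalar $\lambda$ with $w^T(A+A'(x))=\lambda w^T$ and $w^T\mathcal{B}=0$. Using the outer-product form of $A'(x)$ I would compute $w^TA'(x)=c\,(\mathcal{B}_u\Delta_u(x))^T$, where $c=\frac{w^Tx}{x^Tx+\epsilon}$ is a scalar, so that the eigenvector relation reduces to $w^TA=\lambda w^T-c\,(\mathcal{B}_u\Delta_u(x))^T$.

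The argument then splits on the scalar $c$. When $c=0$ (equivalently $w^Tx=0$) the correction term drops out, $w^TA=\lambda w^T$, and $w$ is a left eigenvector of the nominal $A$ lying in the left null space of $\mathcal{B}$; this contradicts controllability of $(A,\mathcal{B})$ by PBH, and that branch is closed. The remaining case $c\neq0$ is where I expect the real work to lie: here $w$ is genuinely perturbed away from an eigenvector of $A$, and one must exploit the orthogonality $(\mathcal{B}_u\Delta_u(x))^T\mathcal{B}=0$. I would attempt to propagate the relation by right-multiplying $w^T(A-\lambda I)=-c\,(\mathcal{B}_u\Delta_u(x))^T$ by $\mathcal{B}$ and then by successive powers of $A$, aiming to show $w^TA^k\mathcal{B}=0$ for all $k$ and again contradict controllability of $(A,\mathcal{B})$.

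I expect this last step to be the main obstacle. Because $A'(x)$ is a rank-one perturbation, controllability is not automatically inherited, and the naive propagation already stalls at $w^TA^2\mathcal{B}=-c\,(\mathcal{B}_u\Delta_u(x))^TA\mathcal{B}$, which need not vanish. Closing the $c\neq0$ case therefore appears to require more than the bare orthogonality relations, most plausibly a magnitude/regularity condition on $A'(x)$ (the Tikhonov parameter $\epsilon$ keeps $A'(x)$ bounded) guaranteeing that the rank-one update cannot shift an eigenvalue of $A+A'(x)$ onto an uncontrollable mode, or an appeal to the matrix-determinant (Sherman-Morrison) lemma applied to the Hautus matrix $[\lambda I-A-A'(x)\mid \mathcal{B}]$ to express any rank drop in terms of the controllable nominal pair. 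I would concentrate the effort there, since the $c=0$ branch and the identity $A'(x)\mathcal{B}=0$ are essentially immediate.
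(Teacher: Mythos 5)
Your proposal is incomplete, and you say so yourself: the $c\neq 0$ branch of the PBH argument is left open. The key thing to recognize is that this gap cannot be closed, because the statement as written is false; the stalling point you identified, $w^TA^2\mathcal{B} = -c\,(\mathcal{B}_u\Delta_u(x))^TA\mathcal{B}$, is a genuine obstruction, since nothing in the hypotheses forces $\mathcal{B}_u^TA\mathcal{B}=0$. A concrete counterexample: take $n=3$, $m=1$, $\mathcal{B}=e_1$ (with $e_1,e_2,e_3$ the standard basis of $\mathbb{R}^3$), and let $A$ be the down-shift matrix defined by $Ae_1=e_2$, $Ae_2=e_3$, $Ae_3=0$, so that $(A,\mathcal{B})$ is controllable; take $\mathcal{B}_u=\left[e_2\;\,e_3\right]$, which satisfies $\mathcal{B}_u^T\mathcal{B}=0$. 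At the state $x=e_3$, let the unmatched uncertainty take the value $\Delta_u(x)=[-(1+\epsilon),\;0]^T$, so that $\mathcal{B}_u\Delta_u(x)=-(1+\epsilon)e_2$ and
\[
A'(x)=\frac{x\,\Delta_u^T(x)\mathcal{B}_u^T}{x^Tx+\epsilon}=\frac{e_3\left(-(1+\epsilon)\right)e_2^T}{1+\epsilon}=-e_3e_2^T.
\]
Then $(A+A'(x))\mathcal{B}=A\mathcal{B}=e_2$, but $(A+A'(x))^2\mathcal{B}=(A+A'(x))e_2=e_3-e_3=0$, so the controllability matrix $\left[\mathcal{B}\,|\,(A+A'(x))\mathcal{B}\,|\,(A+A'(x))^2\mathcal{B}\right]$ has rank $2<3$. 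Equivalently, $w=e_3$ satisfies $w^T(A+A'(x))=0$ and $w^T\mathcal{B}=0$, a PBH failure, and it occurs precisely in your open branch: $c=\frac{w^Tx}{x^Tx+\epsilon}=\frac{1}{1+\epsilon}\neq 0$.

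The comparison with the paper is instructive. The paper's proof expands the controllability matrix directly and claims $(A+A'(x))^k\mathcal{B}=A^k\mathcal{B}$ for all $k$, invoking the binomial identity $(a+b)^k=\sum_m\binom{k}{m}a^{k-m}b^m$. That identity holds only for commuting quantities, and $A$ and $A'(x)$ do not commute: the correct expansion of $(A+A'(x))^k$ is a sum over all words of length $k$ in the letters $A$ and $A'(x)$, and such a word annihilates $\mathcal{B}$ only when its rightmost letter is $A'(x)$. Words such as $A'(x)A$ survive; already $(A+A'(x))^2\mathcal{B}=A^2\mathcal{B}+A'(x)A\mathcal{B}$ with
\[
A'(x)A\mathcal{B}=\frac{x\,\Delta_u^T(x)\mathcal{B}_u^TA\mathcal{B}}{x^Tx+\epsilon},
\]
which is exactly the term your PBH computation flags and which the counterexample above exploits. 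So the paper's proof does not close the gap either; it conceals it behind an invalid algebraic step, while your formulation at least makes the true obstruction visible. No additional cleverness will rescue the theorem from the stated hypotheses alone: a correct version must either add a condition killing the cross terms (for instance $\mathcal{B}_u^TA^j\mathcal{B}=0$ for $1\leq j\leq n-2$, which is very restrictive) or weaken the conclusion from ``always controllable'' to controllability for all states and uncertainty values outside the lower-dimensional algebraic set on which the Hautus rank drops.
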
 
\begin{proof}
Writing the controllability matrix for the augmented system,
\begin{equation}
\dot x(t) = (A+A'(x))x(t) + \mathcal{B}u(t)
\end{equation}
\begin{equation}
\mathcal{C} = \left[\mathcal{B}|\left(A+A'(x)\right)\mathcal{B}|\ldots |\left(A+A'(x)\right)^{n-1}\mathcal{B}\right]
\end{equation}
For the pair $\left(A+A'(x),\mathcal{B}\right)$ to be controllable $Rank(\mathcal{C}) = n$. Therefore controllability matrix $\mathcal{C}$ should contain $n$ independent columns.

We can establish this claim by proving mutual independence of the columns of the controllability matrices, term by term
\[\left(A+A'(x)\right)\mathcal{B} = A\mathcal{B}+A'(x)\mathcal{B}\] 
Evaluating the term $A'(x)B$
\begin{equation}
A'(x)\mathcal{B} = \frac{x(t)\hat \Delta^T_u(x)\mathcal{B}^T_uB}{\left(x(t)^Tx(t) + \epsilon\right)}
\label{eq:27}
\end{equation}
From \emph{Remark 1} we know that $\mathcal{B}^T_u\mathcal{B} = 0$ and therefore $A'(x)\mathcal{B} = 0$ and hence
\begin{equation}
\left(A+A'(x)\right)\mathcal{B} = A\mathcal{B}
\label{eq:28}
\end{equation}
To similarly show $\left(A+A'(x)\right)^n\mathcal{B} = A\mathcal{B}$, $\forall  n$, lets consider a term for any $n=k$
\[\left(A+A'(x)\right)^k\mathcal{B}\] 
Using the identity
\[\left(a+b\right)^k = {k \choose 0}a^k + {k \choose 1}a^{k-1}b + {k \choose 2}a^{k-2}b^2\ldots {k \choose k}b^k\]
The expansion of term $\left(A+A'(x)\right)^k$ can be written as, involving the terms $A'(x)$ raised to power from $0$ to $k$.
\begin{equation}
\left(A+A'(x)\right)^k = {k \choose 0}A^k+{k \choose 1}A^{k-1}A'(x) \ldots +{k \choose k}A'(x)^k
\label{eq:mathematical_induction1}
\end{equation}
and therefore
\begin{eqnarray}
\left(A+A'(x)\right)^k\mathcal{B} &=& {k \choose 0}A^k\mathcal{B}+ {k \choose 1}A^{k-1}A'(x)\mathcal{B}\nonumber \\
&&\ldots + {k \choose k}A'(x)^k\mathcal{B}
\label{eq:mathematical_induction2}
\end{eqnarray}
Lets considering one of the term in the above identity,  ${k \choose m}A^{k-m}A'(x)^m\mathcal{B}$ in the series expansion (\ref{eq:mathematical_induction2}) to evaluate the identity $\left(A+A'(x)\right)^k\mathcal{B}$
\begin{equation}
{k \choose m} A^{k-m}A'(x)^m\mathcal{B} = {k\choose m} A^{k-m}A'(x)^{m-1}A'(x)\mathcal{B}
\end{equation}
From (\ref{eq:27}) and \emph{Remark 1} we know that $A'(x)\mathcal{B} = 0$ and therefore can be generalized for any ``$m$", i.e. $A'(x)^m\mathcal{B} = 0, \forall m$
and hence we can show that,
\begin{equation}
\left(A+A'(x)\right)^k\mathcal{B} = A^k\mathcal{B}
\label{eq:29}
\end{equation} 
and therefore by mathematical induction, (\ref{eq:29}) is true for any $k = n$, hence the controllability matrix for augmented system is proved to be invariant, i.e.
\begin{eqnarray}
\mathcal{C}&=&\left[\mathcal{B}|\left(A+A'(x)\right)\mathcal{B}|\ldots |\left(A+A'(x)\right)^{n-1}\mathcal{B}\right] \nonumber \\
&=& \left[\mathcal{B}|A\mathcal{B}|\ldots |\left(A\right)^{n-1}\mathcal{B}\right]
\end{eqnarray} 
And since the original system $\left(A,\mathcal{B}\right)$ is controllable $Rank(\mathcal{C}) = n$, the augmented system $A+A'(x)$ with $A'(x)$ of form (\ref{eq:unmateched_term3}) is always controllable.\\    
\end{proof}

\section{Simulations}
\label{results}
In this section, we evaluate the presented Hybrid Direct-Indirect adaptive control by numerical simulations on a nonlinear plant.
Consider a nonlinear dynamical system,
\begin{eqnarray}
\dot x_1(t) &=& x_2(t)-x_1(t) \nonumber \\
\dot x_2(t) &=& 0.5x_1(t)-x_2(t)-x_1(t)x_3(t) \label{eq:example_plant}\\
\dot x_3(t) &=& x_1(t)x_2(t)-x_3(t)+u(t) \nonumber
\end{eqnarray}
The nonlinear system (\ref{eq:example_plant}) can be written in terms of nominal system dynamics and total model uncertainty of form \eqref{eq:1}.

The nominal system dynamics is selected to be linear and such that
the pair $\left(A,\mathcal{B}\right)$ is controllable. The total uncertainty $\Delta(x)$ can be written as sum of two unknown nonlinear functions $\Delta_m(x)$ and $\Delta_u(x)$ belonging to range space and null space of $\mathcal{B}$, 
\begin{eqnarray}
\dot x(t) &=& \left[\begin{array}{ccc}
-1&1&0  \\ 
0.5&-1&1\\
0&0&1
\end{array} \right]x(t)+\left[\begin{array}{c}
0\\ 
0\\
1
\end{array}\right](u(t)+x_1x_2) \nonumber\\
&&+\left[\begin{array}{cc}
0&1\\ 
1&0\\
0&0
\end{array}\right]\left[\begin{array}{c}
x_3-x_1x_3\\ 
x_1x_2
\end{array} \right]
\end{eqnarray}
where $x(t) = \left[\begin{array}{ccc}
x_1 & x_2 & x_3
\end{array} \right]^T$ is state of system (\ref{eq:example_plant})

The matched and unmatched uncertainty in the system are represented as follows,
\begin{equation}
\Delta_m(x) = x_1x_2, \hspace{2mm}\Delta_{u}(x) = \left[\begin{array}{c}
x_3-x_1x_3\\ 
x_1x_2
\end{array} \right]\nonumber
\end{equation}

Initial conditions for the simulation are arbitrarily chosen to be $x(0) = [0\; 0 \;0]^T$. The reference model chosen is a stable third order linear system with eigen values $\lambda(A_{rm}) = [-3,-4,-5]^T$. The linear control gain $K^\sigma(t),K_r^\sigma$ are evaluated using pole placement and pseudo-inverse methods. 
The simulation runs for a total time of 120 seconds with an update rate $dt = 0.05$ seconds using RK-4 integration. The threshold error bound for switching the system from $A$ to augmented model $A+A'(x)$ is $\gamma = 0.001$ and the learning rate is set to $\Gamma  = 0.05$. We use a Radial Basis Function(RBF) network for uncertainty estimation with $10$ centers selected in range $[-1\;\;1]$ with bandwidth $\sigma = 0.25$ to have sufficient overlap between RBF activation function to allow smooth function learning.

The reference model tracking performance of the Hybrid Direct-Indirect MRAC algorithm is shown in Fig-\ref{fig:plot_1}. The switching signal between nominal and the augmented plant model is shown in Fig-\ref{fig:plot_1}, switch-$0$ indicate $e(t) \geq \gamma$, and therefore the nominal model is used for the indirect controller, and switch-$1$ indicates $e(t) \leq \gamma$ and hence the augment model is used for controller synthesis. Figure-\ref{fig:plot_2} provide the performance of adaptation law in approximating the matched and un-matched uncertainty. Both matched, and unmatched uncertainty estimates are satisfactorily close to their true values. The performance of the controller in reference tracking and observer model tracking is shown through state error plot in Fig-\ref{fig:plot_4}. Figure-\ref{fig:plot_3} show the evolution of weights in approximating total uncertainty. The control $u(t)$ for reference model tracking for the system with unmatched uncertainty is shown in Fig-\ref{fig:plot_5}. It can be observed that the proposed controller is able to control the plant under matched and unmatched uncertainties and achieve tracking of the designed reference model.
\begin{figure}
\centering
\includegraphics[width=1\linewidth]{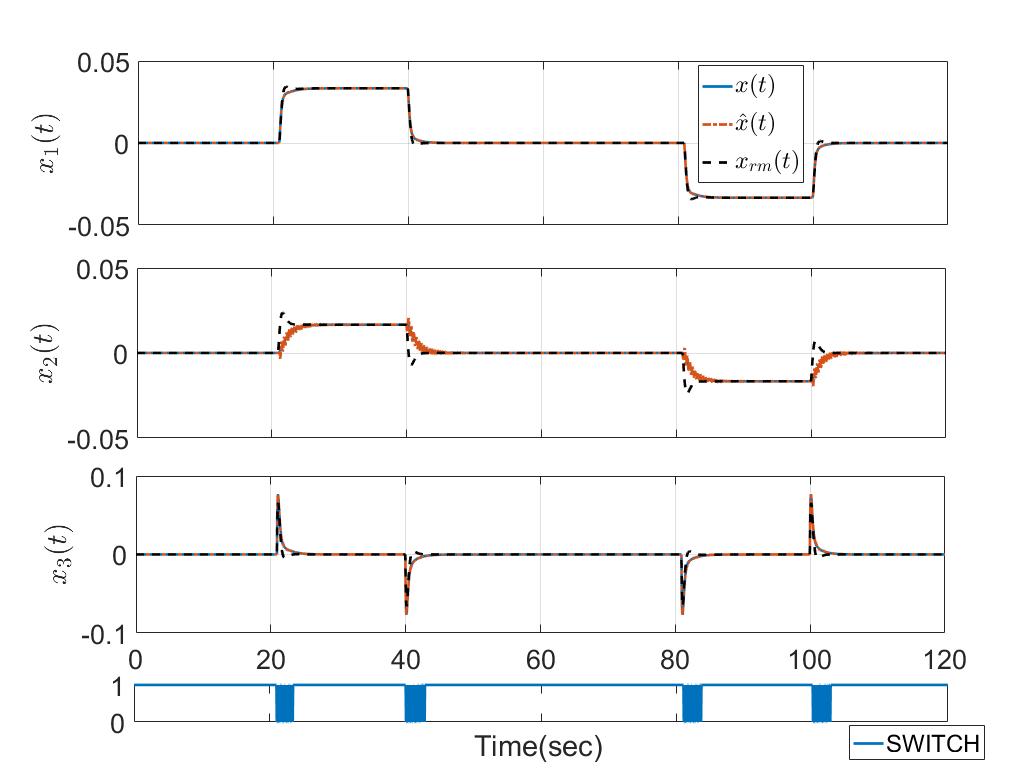}
\caption{Reference model tracking performance of the Hybrid MRAC
    adaptive controller \& Signal for switching between Nominal plant $(0)$ and Augmented plant $(1)$}
\label{fig:plot_1}
\end{figure}
\begin{figure}
\centering
\includegraphics[width=1\linewidth]{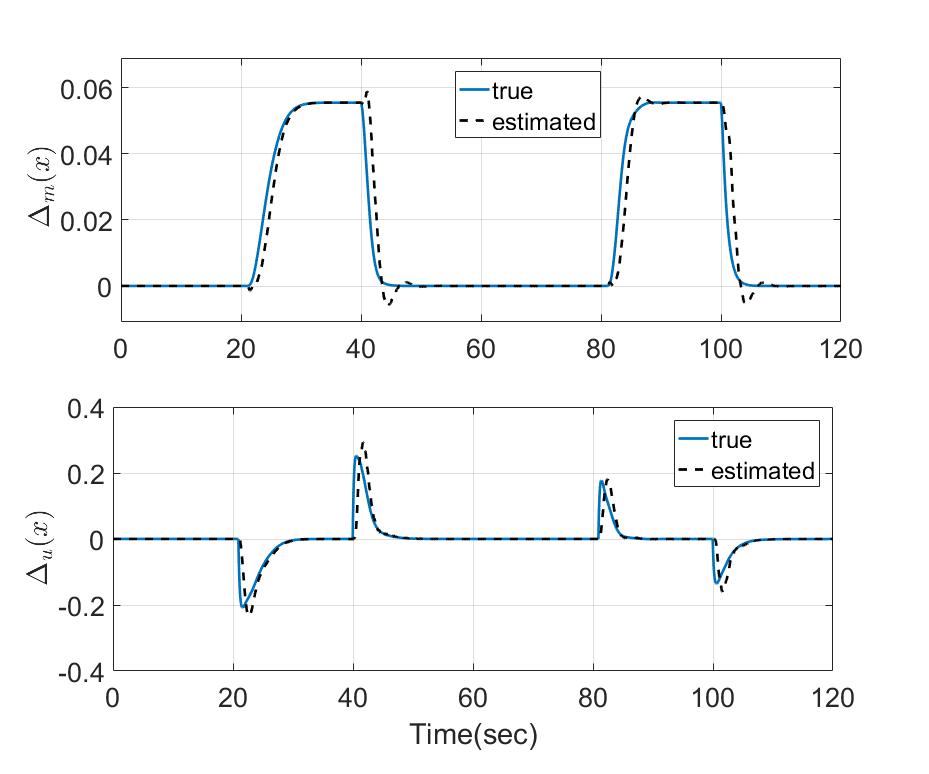}
\caption{Matched Uncertainty captured using online Adaption law using Hybrid MRAC}
\label{fig:plot_2}
\end{figure}
\begin{figure}
    \centering
    \includegraphics[width=1\linewidth]{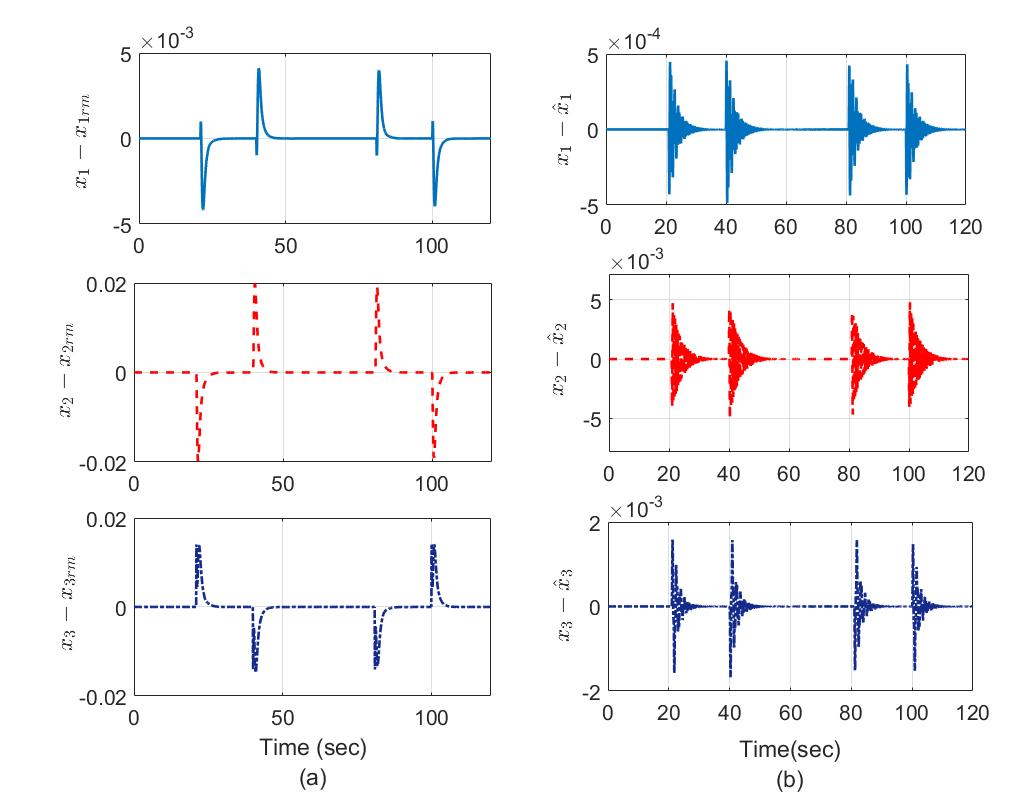}
    \caption{Tracking Error: (a) Reference Model tracking error (b) Observer Model tracking error}
    \label{fig:plot_4}
\end{figure}
\begin{figure}
    \centering
    \includegraphics[width=0.8\linewidth]{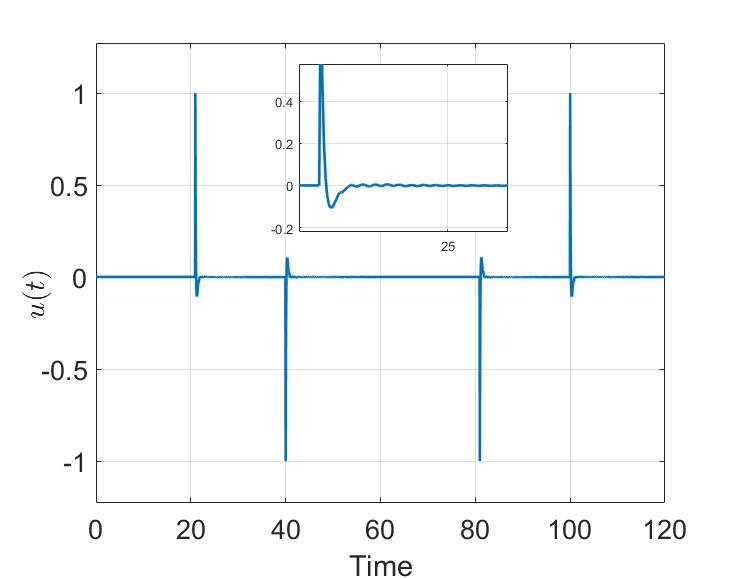}
    \caption{Control input u(t) for the desired reference model tracking}
    \label{fig:plot_5}
\end{figure}
\begin{figure}
    \centering
    \includegraphics[width=0.8\linewidth]{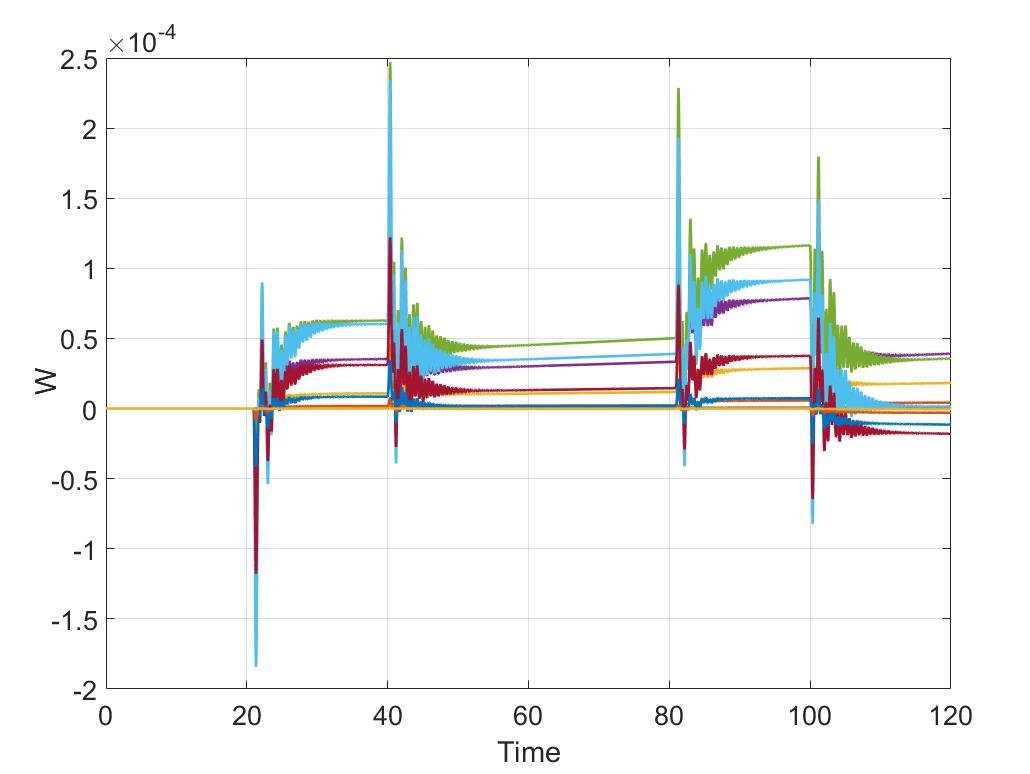}
    \caption{RBF Network Weights for uncertainty approximation}
    \label{fig:plot_3}
\end{figure}

\section{Conclusion}
\label{conclusions}
In this paper, we presented a hybrid direct-indirect adaptive control using MRAC architecture to address a class of problems with matched and unmatched uncertainties. The proposed controller uses an observer model to estimate the unmatched uncertainty and use this information in indirect control synthesis. The estimation of unmatched uncertainty is shown to be possible with observer tracking error alone. It is also demonstrated that the presented method is not restricted to only linear in state uncertainties. However, a broad class of unstructured nonlinear uncertainties can be handled in both matched and unmatched part. We have shown the existence of guaranteed uniform bounds on tracking error under switching systems. Numerical simulations with a non-linear plant model demonstrate the controller performance, in achieving reference model tracking in the presence of matched and unmatched uncertainties.

\bibliographystyle{unsrt}

\bibliography{ACC2018_Unmatched_Uncertainty}

\end{document}